\tikzset{main node/.style={circle,fill=blue!20,draw,minimum size=0.5cm,inner sep=0pt},
            }
\pgfplotsset{compat=1.16}
\DeclarePairedDelimiterX\set[1]\lbrace\rbrace{#1}
\providecommand{\abs}[1]{\lvert#1\rvert}
\DeclareMathOperator*{\argmax}{argmax}
\DeclareMathOperator*{\spt}{spt}
\DeclareMathOperator*{\sgn}{sgn}
\DeclareMathOperator{\R}{\mathbb{R}}
\newcommand{\bx}{\mathbf{x}}
\newcommand{\bX}{\mathbf{X}}
\newcommand{\bM}{\mathbf{M}}
\newcommand{\bp}{\mathbf{p}}
\newcommand{\bP}{\mathbf{P}}
\newcommand{\bU}{\mathbf{U}}
\newcommand{\calG}{\mathcal{G}}
\newcommand{\eps}{\epsilon}
\newcommand{\cl}{\operatorname{cl}}
\def\clap#1{\hbox to 0pt{\hss#1\hss}}
\begin{document}
\title{Multiple Oracle Algorithm to Solve Continuous Games \thanks{The authors acknowledge the support by the project \emph{Research Center for Informatics} (CZ.02.1.01/0.0/0.0/16\_019/0000765).}}

%
%

\author{Tomáš Kroupa$^{\dagger}$\orcidID{0000-0003-1531-2990} \and
Tomáš Votroubek\orcidID{0000-0001-6781-5560}}
\authorrunning{T. Kroupa and T. Votroubek}

\institute{Artificial Intelligence Center, Department of~Computer Science, Faculty of Electrical Engineering, Czech Technical University in Prague, Czech Republic \\
\email{\{tomas.kroupa,votroto1\}@fel.cvut.cz} \\
\url{https://aic.fel.cvut.cz/}}
\maketitle              
\begin{abstract}
  Continuous games are multiplayer games in which strategy sets are compact and utility functions are continuous. These games typically have a highly complicated structure of Nash equilibria, and numerical methods for the equilibrium computation are known only for particular classes of continuous games, such as two-player polynomial games or~games in which pure equilibria are guaranteed to exist. This contribution focuses on the computation and approximation of a mixed strategy equilibrium for the whole class of multiplayer general-sum continuous games. We vastly extend the scope of applicability of the double oracle algorithm, initially designed and proved to converge only for two-player zero-sum games. Specifically, we propose an iterative strategy generation technique, which splits the original problem into the master problem with only a finite subset of strategies being considered, and the subproblem in which an oracle finds the best response of each player. This simple method is guaranteed to recover an approximate equilibrium in finitely many iterations.
  Further, we argue that the Wasserstein distance (the earth mover's distance) is the right metric for the space of mixed strategies for our purposes. Our main result is the convergence of this algorithm in the Wasserstein distance to an equilibrium of the original continuous game. The numerical experiments show the performance of our method on several classes of games including randomly generated examples.

\keywords{Non-cooperative game \and Continuous game \and Polynomial game \and Nash equilibrium}
\end{abstract}

\section{Introduction}
A strategic $n$-player game is called continuous if the action space of each player is a compact subset of Euclidean space and all utility functions are continuous. Many application domains have a~continuum of actions expressing the amount of time, resources, location in space~\cite{Kamra18}, or parameters of classifiers \cite{Loiseau19}. This involves also several games modeling the cybersecurity scenaria; see \cite{Niu2021,Loiseau19,roussillon2021scalable}. Continuous games have equilibria in mixed strategies by Glicksberg's generalization of the Nash's theorem \cite{Glicksberg52}, but those equilibria are usually very hard to characterize and compute. We point out the main difficulties in the analysis and development of algorithms and numerical methods for continuous games.
\begin{itemize}
  \item The equilibrium can be any tuple of mixed strategies with infinite supports or almost any tuple of~finitely-supported mixed strategies \cite{Rehbeck18}.
  \item Bounds on the size of supports of equilibrium strategies are known only for particular classes of continuous games \cite{SteinOzdaglarParrilo08}.
\item Some important games have only mixed equilibria; for example, certain variants of Colonel Blotto games \cite{Golman09}.
  \item Finding a mixed strategy equilibrium involves locating its support, which lies inside a continuum of points.
\end{itemize}

To the best of our knowledge, algorithms or numerical methods exist only for very special classes of continuous games. In particular, two-player zero-sum polynomial games can be solved by the sum-of-squares optimization based on the sequence of semidefinite relaxations of the original problem \cite{ParriloIEEE06,LarakiLasserre12}. The book \cite{BasarOlsder99} contains a detailed analysis of equilibria for some families of games with a particular shape of utility functions (games of timing, bell-shaped kernels, etc.) Fictitious play, one of the principal learning methods for finite games, was recently extended to continuous action spaces and applied to Blotto games \cite{Ganzfried21}. The dynamics of fictitious play were analyzed only under further restrictive assumptions in the continuous setting \cite{Hofbauer06}. No-regret learning studied in \cite{Mertikopoulos2019} can be applied to finding pure equilibria or to mixed strategy learning in finite games. Convergence guaranteess for algorithms in the distributed environment solving convex-concave games and some generalizations thereof are developed in \cite{Mertikopoulos2018-ef}. In a similar setting, \cite{Chasnov20} provide convergence guarantees to a neighborhood of a stable Nash equilibrium for gradient-based learning algorithms.

The double oracle algorithm \cite{McMahan03} was extended from finite games and proved to converge for all two-player zero-sum continuous games \cite{Adam21}. The algorithm is relatively straightforward. It is based on the iterative solution of finite subgames and the subsequent extension of the current strategy sets with best response strategies. Despite its simplicity, this method was successfully adapted to large extensive-form games \cite{Bosansky14}, Bayesian games \cite{Li21}, and security domains with complex policy spaces \cite{Xu21}.

In this contribution, we extend the double oracle algorithm beyond two-player zero-sum continuous games. Our main result guarantees that the new method converges in the Wasserstein distance to an equilibrium for any general-sum $n$-player continuous game. Interestingly enough, it turns out that the Wasserstein distance represents a very natural metric on the space of mixed strategies. We demonstrate the computational performance of our method on selected examples of games appearing in the literature and on randomly generated games.

\section{Basic Notions}
The player set is $N=\{1,\dots,n\}$. Each player $i\in N$ selects a pure strategy $x_i$ from a nonempty compact set $X_i\subseteq \R^{d_i}$, where $d_i$ is a positive integer. Put $$\bX=X_1\times\dots\times X_n.$$ A~pure strategy profile is an $n$-tuple $\bx=(x_1,\dots,x_n)\in \bX$. We assume that each utility function $u_i\colon \bX\to \R$ is continuous. A \emph{continuous game} is the tuple $$\calG=\langle N,(X_i)_{i\in N},(u_i)_{i\in N}\rangle.$$  We say that $\calG$ is \emph{finite} if each strategy set $X_i$ is finite.

Consider nonempty compact subsets $Y_i\subseteq X_i$ for $i\in N$. When each $u_i$ is restricted to $Y_1\times\dots\times Y_n$, the continuous game $\langle N,(Y_i)_{i\in N},(u_i)_{i\in N}\rangle $ is called the \emph{subgame} of~$\calG$.

 A~\emph{mixed strategy} of player $i$ is a~Borel probability measure $p$ over $X_i$. The \emph{support} of $p$ is the compact set defined by $$\spt p = \bigcap \{K\subseteq X\mid K \text{ compact},\, p(K)=1\}.$$
 In~the paper, we construct only the mixed strategies with finite supports. The support of Dirac measure $\delta_x$ is the singleton $\spt \delta_x=\{x\}$, where $x\in X_i$. In~general, when the support $\spt p$ of a mixed strategy $p$ is finite, it means that $p(x)>0$ for all $x\in \spt p$ and $\sum_{x\in \spt p} p(x)=1$. For clarity, we emphasize that a finitely-supported mixed strategy $p$ of player $i$ should be interpreted as the function $p\colon X_i\to [0,1]$ vanishing outside $\spt p$, and not as a vector of probabilities with a fixed dimension. This is because only the former viewpoint enables us to consider the distance between \emph{any} pair of pure strategies in $X_i$, which makes it possible to compute a distance between mixed strategies with \emph{arbitrary} supports.

 The set of all mixed strategies of player $i$ is $M_i$. Define $$\bM= M_1\times\dots\times M_n.$$ For a mixed strategy profile $\bp=(p_1,\dots,p_n)\in \bM$, the~expected utility of player $i\in N$ is
$$
  U_i(\bp) =  \smallint_{\bX} u_i\;\mathrm{d}(p_1\times\dots\times p_n),
$$
where $p_1\times\dots\times p_n$ is the product probability measure. This definition  yields a function $U_i\colon \bM\to \R,$ which can be effectively evaluated only in special cases (for example, when each $\spt p_i$ is finite).
For each $i\in N$, let $$\bM_{-i}=\bigtimes_{\substack{k\in N\\ k\neq i}} M_k.$$ A generic profile of mixed strategies from $\bM_{-i}$ is denoted by $$\bp_{-i}=(p_1,\dots,p_{i-1},p_{i+1},\dots,p_n).$$ If player $i$ uses a pure strategy $x\in X_i$ and the rest of the players play $\bp_{-i}\in\bM_{-i}$, we write simply $U_k(x,\bp_{-i})$ in place of $U_k(\delta_x,\bp_{-i})$ to denote the expected utility  of player~$k$.

A mixed strategy profile $\bp^*=(p_1^*,\dots,p_n^*)\in\bM$ is a \emph{(Nash) equilibrium} in~a~continuous game $\calG$ if, for every $i\in N$ and all $p_i\in M_i$, the inequality $U_i(p_i,\bp_{-i}^*) \leq U_i(\bp^*)$ holds. Glicksberg's theorem \cite{Glicksberg52} says that any continuous game $\calG$ has an equilibrium. The following useful characterization is a consequence of  Proposition \ref{prop:switch}: A profile $\bp^*$ is an equilibrium if, and only if, $U_i(x_i,\bp_{-i}^*) \leq U_i(\bp^*)$, for each $i\in N$ and every $x_i\in X_i$.

Let $\eps\geq 0$. An \textit{$\epsilon$-equilibrium} is a mixed strategy profile $\bp^*\in \bM$ such that $U_i(x_i,\bp_{-i}^*)- U_i(\bp^*) \leq \eps$, for all $i\in N$ and $x_i\in X_i$. This implies that, for every $p_i\in M_i$, the inequality $U_i(p_i,\bp_{-i}^*)- U_i(\bp^*)\le \eps$ holds, too. Let $$\bU(\bp)=(U_1(\bp),\dots,U_n(\bp)).$$ Define $$\bU(\bx,\bp)=(U_1(x_1,\bp_{-1}),\dots,U_n(x_n,\bp_{-n})),$$ where $\bx\in\bX$ and $\bp\in \bM$. Let $\bm{\eps}=(\eps,\dots,\eps)$. Using the above introduced vectorial notation, a mixed strategy profile $\bp^*\in \bM$ is an $\epsilon$-equilibrium if, and only if, $$\bU(\bx,\bp^*)-\bU(\bp^*)\leq \bm{\eps},\qquad \text{for all $\bx\in\bX$}.$$

\section{Convergence of Mixed Strategies}
 We consider an arbitrary metric~$\rho_i$ on the compact strategy space $X_i\subseteq \R^{m_i}$ of each player $i\in N$. This enables us to quantify a~distance between pure strategies $x,y \in X_i$ by the number $\rho_i(x,y)\ge 0$. Consequently, we can define the Wasserstein distance $d_W$ on $M_i$, which is compatible with the metric of the underlying strategy space $X_i$ in the sense that $$d_W(\delta_x,\delta_y)=\rho_i(x,y),\qquad \text{for all $x,y \in X_i$}.$$ The preservation of distance from $X_i$ to $M_i$ is a very natural property, since the space of pure strategies $X_i$ is embedded in $M_i$ via the correspondence $x\mapsto \delta_x$ mapping the pure strategy $x$ to the Dirac measure $\delta_x$.

The Wasserstein distance originated from optimal transport theory. It is nowadays highly instrumental in solving many problems of computer science. Specifically, the \emph{Wasserstein distance} of mixed strategies $p,q\in M_i$ is $$
d_W(p,q) = \inf_{\mu} \int_{X_i^2} \rho_i(x,y) \;\mathrm{d}\mu(x,y),$$
where the infimum is over all Borel probability measures $\mu$ on $X_i^2$ whose one-dimensional marginals are $p$ and $q$:
\begin{align*}
  \mu(A\times X_i) & = p(A),\\
  \mu(X_i \times A)  & = q(A), \qquad \text{for all Borel subsets $A\subseteq X_i$.}
\end{align*}

The dependence of $d_W$ on the metric $\rho_i$ is understood. The function $d_W$ is a metric on $M_i$. Since $X_i$ is compact, it has necessarily bounded diameter. This implies that the convergence in $(M_i,d_W)$ coincides with the weak convergence \cite[Corollary 2.2.2]{Panaretos20}. Specifically, the following two assertions are equivalent for any sequence $(p^j)$ in $M_i$:
\begin{enumerate}
  \item $(p^j)$ converges to $p$ in $(M_i,d_W)$.
  \item $(p^j)$ \emph{weakly converges} to $p$, which means by the definition that $$\lim_{j} \smallint\nolimits_{X_i}f\;\mathrm{d}p^j =  \smallint\nolimits_{X_i}f\;\mathrm{d}p,\qquad \text{for every continuous function $f\colon X_i\to \R$.}$$
\end{enumerate}

The metric space $(M_i,d_W)$ is compact by \cite[Proposition 2.2.3]{Panaretos20}. Consequently, the joint strategy space $\bM$ is compact in a product metric as well, and any sequence $(\bp^j)$ in $\bM$ has an accumulation point. In other words:
\begin{proposition}\label{prop:convergence}
  Any sequence of mixed strategy profiles $(\bp^j)$ in $\bM$ contains a~weakly convergent subsequence.
\end{proposition}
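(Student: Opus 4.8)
The plan is to build directly on the two facts recorded just before the statement: each factor $(M_i,d_W)$ is a compact metric space, and convergence in $d_W$ on $M_i$ coincides with weak convergence. With these in hand, the proposition reduces to the observation that a finite product of compact metric spaces is again compact, hence sequentially compact, so that any sequence in $\bM$ has a subsequence whose coordinates all converge weakly.

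First I would equip the product $\bM=M_1\times\dots\times M_n$ with a metric compatible with the product topology, for instance $D(\bp,\bp')=\max_{i\in N} d_W(p_i,p_i')$ for $\bp,\bp'\in\bM$. Since $N$ is finite, the product of the compact spaces $(M_i,d_W)$ is compact; and because $\bM$ is metrizable, this compactness is equivalent to sequential compactness. Thus any sequence $(\bp^j)$ in $\bM$ admits a subsequence converging to some $\bp\in\bM$ in the metric $D$.

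To make the extraction explicit rather than quoting product sequential compactness as a black box, I would instead argue coordinatewise. Writing $\bp^j=(p_1^j,\dots,p_n^j)$, the sequence $(p_1^j)_j$ lies in the compact space $(M_1,d_W)$, so it has a subsequence converging in $d_W$; passing to this subsequence and then repeating for coordinate $2$, and so on up to coordinate $n$, yields after finitely many refinements a single subsequence along which every coordinate sequence $(p_i^j)_j$ converges in $(M_i,d_W)$ for each $i\in N$. This iterated selection terminates precisely because there are only $n$ players.

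Finally, I would translate convergence in each $d_W$ back into weak convergence using the stated equivalence: along the final subsequence every coordinate satisfies $\lim_j \int_{X_i} f\,\mathrm{d}p_i^{j} = \int_{X_i} f\,\mathrm{d}p_i$ for every continuous $f\colon X_i\to\R$, which is exactly what it means for the profile subsequence to converge weakly to $\bp=(p_1,\dots,p_n)$. There is no serious obstacle here; the only point requiring a moment of care is that convergence in the product metric $D$ is equivalent to simultaneous convergence in all $n$ coordinates, so that the finitely many subsequence extractions can be combined without conflict — a routine consequence of $N$ being finite.
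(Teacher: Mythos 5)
Your proposal is correct and follows essentially the same route as the paper, which likewise invokes compactness of each $(M_i,d_W)$ (citing Panaretos--Zemel), deduces compactness of $\bM$ in a product metric, and identifies $d_W$-convergence with weak convergence. The only difference is cosmetic: you unpack the sequential compactness of the finite product via explicit coordinatewise subsequence extraction, where the paper simply quotes it.
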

\noindent
 The function $U_i$ is continuous on $\bM$ by the definition of weak convergence. This implies that if $(\bp^j)$ weakly converges to $\bp$ in $\bM$, then the corresponding values of expected utility goes to $U_i(\bp)$, that is, $\lim_j U_i(\bp^j) = U_i(\bp)$.
By compactness of~$M_i$ and continuity of $U_i$, all maxima and maximizers in the paper exist. In particular, the optimal value of utility function in response to the mixed strategies of other players is attained for some pure strategy.

\begin{proposition}\label{prop:switch}
For each player $i\in N$ and any mixed strategy profile $\bp_{-i}\in \bM_{-i}$, there exists a pure strategy $x_i\in X_i$ such that $$\max_{p\in M_i} U_i(p,\bp_{-i})=\max_{x\in X_i} U_i(x,\bp_{-i})=U_i(x_i,\bp_{i}).$$
\end{proposition}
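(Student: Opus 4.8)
The plan is to reduce the optimization over the infinite-dimensional set $M_i$ to the compact set $X_i$, exploiting the fact that the expected utility is affine in player $i$'s own strategy. First I would set up the auxiliary function $g\colon X_i\to\R$ by $g(x)=U_i(x,\bp_{-i})$ and argue that it is continuous. Indeed, $U_i$ is continuous on $\bM$ (by the definition of weak convergence, as noted just before the statement), and the embedding $x\mapsto\delta_x$ of $X_i$ into $(M_i,d_W)$ is continuous -- in fact isometric, since $d_W(\delta_x,\delta_y)=\rho_i(x,y)$. Hence $g$ is continuous as a composition. Because $X_i$ is compact, $g$ attains its maximum at some $x_i\in X_i$, which immediately gives the second equality $\max_{x\in X_i}U_i(x,\bp_{-i})=U_i(x_i,\bp_{-i})$ and the existence claim.

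For the first equality, the crucial step is to rewrite the expected utility of a general mixed strategy $p\in M_i$. Starting from the definition of $U_i$ as an integral of $u_i$ against the product measure, and applying the Fubini--Tonelli theorem, I would integrate out the coordinates of $\bp_{-i}$ first to obtain
$$U_i(p,\bp_{-i}) = \int_{X_i} g(x)\,\mathrm{d}p(x).$$
This expresses $U_i(\cdot,\bp_{-i})$ as an affine functional of $p$, namely the $p$-average of the values of $g$.

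With this representation in hand, the two maxima are compared in one line each. On one side, every Dirac measure $\delta_x$ belongs to $M_i$, so $\max_{x\in X_i}g(x)\le\sup_{p\in M_i}U_i(p,\bp_{-i})$. On the other side, for any $p\in M_i$ we have $\int_{X_i}g\,\mathrm{d}p\le\int_{X_i}(\max_{y}g)\,\mathrm{d}p=\max_{y\in X_i}g(y)$, using only that $p$ is a probability measure and that $g$ is bounded by its maximum. Hence $\sup_{p\in M_i}U_i(p,\bp_{-i})\le\max_{x\in X_i}g(x)$. Combining the two inequalities yields equality of the suprema, and the common value equals $g(x_i)=U_i(x_i,\bp_{-i})$, so the supremum over $M_i$ is in fact a maximum attained at the pure strategy $x_i$.

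The main obstacle, and really the only nonroutine point, is justifying the Fubini interchange that produces the formula $U_i(p,\bp_{-i})=\int_{X_i}g\,\mathrm{d}p$. This is harmless here: $u_i$ is continuous on the compact set $\bX$, hence bounded and measurable, and all the measures involved are probability measures, so Tonelli's theorem applies and the iterated integral is well defined. Everything after that is the elementary fact that an average of a function cannot exceed its maximum, with equality achieved by point masses concentrated at a maximizer.
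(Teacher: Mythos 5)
Your proof is correct and is essentially the paper's own (implicit) approach: the paper states Proposition~\ref{prop:switch} without a separate proof, justifying it only by the preceding remark that compactness of $M_i$ and continuity of $U_i$ make all maxima exist, and your Fubini reduction $U_i(p,\bp_{-i})=\int_{X_i}g\,\mathrm{d}p$ together with the average-versus-maximum comparison and the Dirac embedding is precisely the standard reasoning that remark leaves implicit. Two minor notes: the paper's displayed formula contains a typo ($U_i(x_i,\bp_{i})$ should read $U_i(x_i,\bp_{-i})$), which your argument silently and correctly repairs, and since $u_i$ may take negative values you are really invoking Fubini for bounded measurable integrands (or Tonelli applied to $\lvert u_i\rvert$ to check integrability first) --- a point your boundedness remark already covers.
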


In general, computing $d_W(p,q)$ for any pair of mixed strategies $p,q\in M_i$ is a difficult infinite-dimensional optimization problem. The existing numerical methods for this problem are reviewed in \cite{Peyre19}. In our setting it suffices to evaluate $d_W(p,q)$ only for mixed strategies with finite supports. In particular, it follows immediatelly from the definition of $d_W$ that $$d_W(p,\delta_y)=\sum_{x\in \spt p}\rho_i(x,y)p(x),$$ for every finitely-supported mixed strategy $p\in M_i$ and any $y \in X_i$. If mixed strategies $p$ and $q$ have finite supports, then computing~$d_W(p,q)$ becomes the linear programming problem with variables $\mu(x,y)$ indexed by $(x,y)\in\spt p \times \spt q$. Specifically, the objective function to be minimized is
\begin{equation} \label{def:LPobj}
  \sum_{x\in \spt p} \sum_{y\in \spt q} \rho_i(x,y) \mu(x,y),
\end{equation}
and the constraints are
\begin{align*}
  \mu(x,y) & \ge 0, \quad \forall (x,y)\in\spt p \times \spt q,\\
  \sum_{x\in \spt p} \sum_{y\in \spt q}\mu(x,y) & =1, \\
  \sum_{y\in \spt q}\mu(x,y) & = p(x),\quad \forall x\in \spt p \\
  \sum_{x\in \spt p}\mu(x,y) & = q(y),\quad \forall y\in \spt q.
\end{align*}

We will briefly mention one of the frequently used alternatives to the Wasserstein distance. The \emph{total variation distance} between mixed strategies $p,q\in M_i$ of player $i$ is $$d_{TV}(p,q)= \sup_{A}\; \abs{p(A)-q(A)},$$ where the supremum is over all Borel subsets $A\subseteq X_i$. When both mixed strategies $p$ and $q$ have finite supports, we have $d_{TV}(p,q)=\frac 12\sum_{x}\abs{p(x)-q(x)}$, where the sum is over $\spt p\;\cup\; \spt q$. In this case, $d_W$ and $d_{TV}$ satisfy the following inequalities (see \cite{Gibbs02}):
\begin{equation}\label{W:ineq}
  d_{min} \cdot  d_{TV}(p,q)\le d_W(p,q) \le d_{max} \cdot  d_{TV}(p,q),
\end{equation}
where
\begin{align*}
  d_{min} & = \min \{\rho_i(x,y) \mid x,y \in \spt p \cup \spt q, x\ne y\},\\
  d_{max} & = \max \{\rho_i(x,y) \mid x,y \in \spt p \cup \spt q\}.
\end{align*}

If $(p^j)$ converges to $p$ in the total variation distance, then $(p^j)$ converges weakly, but the converse fails. For example, if a sequence $(x^j)$ converges to $x\in X_i$ and $x^j\neq x$ for all~$j$, then the corresponding sequence of Dirac measures converges weakly in $M_i$, as $d_W(\delta_{x^j},\delta_x)=\rho_i(x^j,x)\to 0$. By contrast, it fails to converge in the total variation distance, since $d_{TV}(\delta_{x^j},\delta_{x})=1$ for all $j$.

\section{Main Results}

\begin{algorithm}[t]
  \caption{}
  \label{alg:do}
\textbf{Input}: Continuous game $\calG=\langle N,(X_i)_{i\in N},(u_i)_{i\in N}\rangle$, nonempty finite subsets of initial strategies $X_1^1\subseteq X_1,\dots,X_n^1\subseteq X_n$, and $\epsilon\geq 0$\\
\textbf{Output}: $\eps$-equilibrium $\bp^j$ of game $\calG$.

 \begin{algorithmic}[1]
    \STATE $j= 0$
    \REPEAT
    \STATE $j= j+1$
    \STATE Find an equilibrium $\bp^j$ of $\langle N,(X^j_i)_{i\in N},(u_i)_{i\in N}\rangle$
    \FOR{$i\in N$}
    \STATE Find some $x^{j+1}_i\in \beta_i(\bp_{-i}^j)$
    \STATE $X^{j+1}_i= X_i^j \cup \{x^{j+1}_i\}$
    \ENDFOR
    \UNTIL{$\bU(\bx^{j+1},\bp^j)-\bU(\bp^j)\leq \bm{\eps}$}
 \end{algorithmic}
\end{algorithm}

We propose Algorithm \ref{alg:do} as an iterative strategy-generation technique for (approximately) solving any continuous game~$\calG=\langle N,(X_i)_{i\in N},(u_i)_{i\in N}\rangle$. We recall that the~\emph{best response set} of player $i\in N$ with respect to a mixed strategy profile $\bp_{-i}\in \bP_{-i}$ is
$$
\beta_i(\bp_{-i}) =  \argmax_{x\in X_i} U_i(x,\bp_{-i}).
$$
The set $\beta_i(\bp_{-i})$ is always nonempty by Proposition \ref{prop:switch}.  We assume that every player uses an oracle to recover at least one best response strategy, which means that the player is able to solve the corresponding optimization problem to global optimality. In Section \ref{sec:exp} we will see the instances of games for which this is possible.

Algorithm \ref{alg:do} proceeds as follows. In every iteration $j$, finite strategy sets $X_i^j$ are constructed for each player $i\in N$ and the corresponding finite subgame of $\calG$ is solved. Let $\bp^j$ be its equilibrium. Then, an arbitrary best response strategy $x_i^{j+1}$ with respect to $\bp_{-i}^{j}$ is added to each strategy set~$X_i^j$. Those steps are repeated until
\begin{equation}\label{def:diff}
  U_i(x_i^{j+1},\bp_{-i}^j)-U_i(\bp^j)\leq \eps \qquad \text{for each $i\in N$.}
\end{equation}

First we discuss basic properties of the algorithm.  At each step~$j$, we have the inequality
\begin{equation}\label{ineq:upper}
  \bU(\bx^{j+1},\bp^j)\ge \bU(\bp^j).
\end{equation}
Indeed, for each player $i\in N$, we get
$$
U_i(x_i^{j+1},\bp_{-i}^j) = \max_{x \in X_i} U_i(x,\bp^j_{-i})
 \ge \max_{x \in X_i^j} U_i(x,\bp^j_{-i}) = U_i(\bp^j).$$

We note that the stopping condition of the double oracle algorithm for two-player zero-sum continuous games \cite{Adam21} is necessarily different from \eqref{def:diff}. Namely the~former condition, which is tailored to the zero-sum games, is
\begin{equation} \label{stoppingDO}
  U_1(x_1^{j+1},p_2^j) - U_1(x_2^{j+1},p_1^j) \le \eps.
\end{equation}
When $\calG$ is a two-player zero-sum continuous game, it follows immediately from \eqref{ineq:upper} that \eqref{stoppingDO} implies \eqref{def:diff}.

We prove correctness of Algorithm \ref{lem:algstop} --- the eventual output $\bp^j$ is an $\eps$-equilibrium of the original game $\calG$.
\begin{lemma}\label{lem:algstop}
The strategy profile $\bp^j$ is an $\eps$-equilibrium of $\calG$, whenever Algorithm~\ref{alg:do} terminates at step $j$.
\end{lemma}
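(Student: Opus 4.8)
The plan is to unwind the definitions. The lemma asks me to show that when the stopping condition \eqref{def:diff} holds at step $j$, the profile $\bp^j$ is an $\eps$-equilibrium of the \emph{original} game $\calG$ — not merely of the finite subgame. Recall from the excerpt that $\bp^j$ being an $\eps$-equilibrium of $\calG$ means, by the vectorial characterization, that $\bU(\bx,\bp^j)-\bU(\bp^j)\le\bm{\eps}$ for every $\bx\in\bX$, equivalently $U_i(x_i,\bp^j_{-i})-U_i(\bp^j)\le\eps$ for all $i\in N$ and all $x_i\in X_i$ (note: over the full set $X_i$, not just $X_i^j$). So the target is a statement quantified over the full continuous strategy space.

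**First I would** fix an arbitrary player $i\in N$ and an arbitrary pure strategy $x_i\in X_i$. The key observation is that the best-response oracle at step $j$ selects $x_i^{j+1}\in\beta_i(\bp^j_{-i})$, which by definition of $\beta_i$ and by Proposition~\ref{prop:switch} achieves the global maximum of $U_i(\cdot,\bp^j_{-i})$ over the \emph{entire} space $X_i$. Hence for every $x_i\in X_i$ we have the chain
$$
U_i(x_i,\bp^j_{-i}) \le \max_{x\in X_i} U_i(x,\bp^j_{-i}) = U_i(x_i^{j+1},\bp^j_{-i}).
$$
This is the crucial step that lets me pass from the finite subgame to the full game: the oracle's maximization is over all of $X_i$, so the best response dominates any pure strategy whatsoever, not just those in $X_i^j$.

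**Then** I combine this with the termination hypothesis. Since the algorithm stopped at step $j$, condition \eqref{def:diff} gives $U_i(x_i^{j+1},\bp^j_{-i})-U_i(\bp^j)\le\eps$ for each $i\in N$. Chaining the two inequalities yields, for every $x_i\in X_i$,
$$
U_i(x_i,\bp^j_{-i}) - U_i(\bp^j) \le U_i(x_i^{j+1},\bp^j_{-i}) - U_i(\bp^j) \le \eps.
$$
Since $i$ and $x_i$ were arbitrary, this is exactly the defining condition of an $\eps$-equilibrium of $\calG$, and I would then invoke the equivalent vectorial form $\bU(\bx,\bp^j)-\bU(\bp^j)\le\bm{\eps}$ for all $\bx\in\bX$ to close the argument.

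**I do not expect a genuine obstacle here**, as the lemma is essentially a bookkeeping consequence of the oracle's global optimality together with the stopping test. The one subtlety worth stating explicitly — and the place where a careless proof could go wrong — is the distinction between maximizing over $X_i^j$ versus over $X_i$: the equilibrium $\bp^j$ is only guaranteed to be an equilibrium of the finite subgame, so $U_i(\bp^j)=\max_{x\in X_i^j}U_i(x,\bp^j_{-i})$ need not equal $\max_{x\in X_i}U_i(x,\bp^j_{-i})$. The proof crucially relies on the oracle's response $x_i^{j+1}$ being a global best response over all of $X_i$, which is exactly what bridges this gap and is what the stopping condition then certifies to within $\eps$.
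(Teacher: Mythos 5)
Your proof is correct and follows essentially the same route as the paper's: both decompose $U_i(x_i,\bp^j_{-i})-U_i(\bp^j)$ via the oracle's best response $x_i^{j+1}$, bounding the first gap by the global optimality of $x_i^{j+1}$ over all of $X_i$ and the second by the stopping condition \eqref{def:diff}. The only cosmetic difference is that you argue componentwise for each player $i$ while the paper writes the same two-step estimate in vectorial notation $\bU(\bx,\bp^j)-\bU(\bx^{j+1},\bp^j)+\bU(\bx^{j+1},\bp^j)-\bU(\bp^j)$.
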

\begin{proof}
 Let $\bx\in \bX$. We get
 $$
 \bU(\bx,\bp^j)-\bU(\bp^j) = \bU(\bx,\bp^j)-\bU(\bx^{j+1},\bp^j) +  \bU(\bx^{j+1},\bp^j)-\bU(\bp^j).
 $$
Then $\bU(\bx,\bp^j)-\bU(\bx^{j+1},\bp^j)\leq \mathbf{0}$, since $\bx^{j+1}$ is the profile of best response strategies. Consequently, we obtain
$$
\bU(\bx,\bp^j)-\bU(\bp^j)  \le  \bU(\bx^{j+1},\bp^j)-\bU(\bp^j) \le \bm{\eps},
$$
where the last inequality is just the terminating condition \eqref{def:diff}. Therefore, $\bp^j$ is an $\eps$-equilibrium of  $\calG$. \qed
\end{proof}

If the set $\bX^j=X^j_1\times \dots \times X^j_n$ cannot be further inflated, Algorithm \ref{alg:do} terminates.
\begin{lemma}\label{lemma:do_stop}
  Assume that $\bX^j=\bX^{j+1}$ at step $j$ of Algorithm~\ref{alg:do}. Then $$ \bU(\bx^{j+1},\bp^j)=\bU(\bp^j).$$
\end{lemma}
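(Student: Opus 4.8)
The plan is to show that the hypothesis $\bX^j=\bX^{j+1}$ forces each newly generated best response to already lie in the current strategy set, and then to sandwich $U_i(x_i^{j+1},\bp_{-i}^j)$ between $U_i(\bp^j)$ and itself, using the equilibrium property of $\bp^j$ in the subgame. The upper bound is already recorded in \eqref{ineq:upper}, so the work lies entirely in producing the matching lower bound.

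First I would translate the product equality into a per-player statement. Since $X_i^j\subseteq X_i^{j+1}$ for every $i$ and all the sets are nonempty, the equality $\bX^j=\bX^{j+1}$ of the Cartesian products is equivalent to $X_i^j=X_i^{j+1}$ for each $i\in N$; nonemptiness of the factors is precisely what rules out the degenerate case where a product collapses without the factors coinciding. Because $X_i^{j+1}=X_i^j\cup\{x_i^{j+1}\}$, this equality means $x_i^{j+1}\in X_i^j$ for every player $i$. In other words, the best response added at step $j$ was already available in the subgame.

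Next I would assemble the two one-sided inequalities. The inequality \eqref{ineq:upper}, established in the text, gives $U_i(x_i^{j+1},\bp_{-i}^j)\ge U_i(\bp^j)$ for each $i$. For the reverse direction I would reuse the chain displayed just after \eqref{ineq:upper}, whose right-hand equality $\max_{x\in X_i^j}U_i(x,\bp_{-i}^j)=U_i(\bp^j)$ holds because $\bp^j$ is an equilibrium of the subgame on $\bX^j$. Since I have just shown that $x_i^{j+1}\in X_i^j$, substituting $x=x_i^{j+1}$ into this maximum yields $U_i(x_i^{j+1},\bp_{-i}^j)\le \max_{x\in X_i^j}U_i(x,\bp_{-i}^j)=U_i(\bp^j)$.

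Combining the two bounds gives $U_i(x_i^{j+1},\bp_{-i}^j)=U_i(\bp^j)$ for every $i\in N$, which is exactly the claimed identity $\bU(\bx^{j+1},\bp^j)=\bU(\bp^j)$. The only step requiring a moment of care is the first one, passing from equality of products to equality of factors; this is immediate precisely because the strategy sets are nonempty and nested, so I do not expect any substantive obstacle. The conceptual content is simply that a ``stalled'' iteration, in which the inflated profile fails to grow, is one in which every best response is already optimal within the subgame.
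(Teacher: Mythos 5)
Your proof is correct and essentially identical to the paper's: the paper likewise deduces $\bx^{j+1}\in\bX^j$ from $\bX^j=\bX^{j+1}$ and then concludes via the chain $U_i(x_i^{j+1},\bp_{-i}^j)=\max_{x\in X_i^j}U_i(x,\bp_{-i}^j)=U_i(\bp^j)$, using that $x_i^{j+1}$ is a best response and $\bp^j$ an equilibrium of the subgame. Your presentation merely unrolls this equality into a two-sided sandwich (the upper bound restating \eqref{ineq:upper}) and spells out the harmless product-to-factor step, which the paper leaves implicit.
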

\begin{proof}
  The condition $\bX^j=\bX^{j+1}$ implies $\bx^{j+1}\in \bX^{j}$. Then
  $$
U_i(x_i^{j+1},\bp^j_{-i})=\max\limits_{x\in X_i^j}U_i(x,\bp^j_{-i}) =U_i(\bp^j),
$$
for all $i\in N$. \qed
\end{proof}
Proposition \ref{pro:Mahan} extends the analogous result for two-player zero-sum finite games from \cite{McMahan03}.
\begin{proposition}\label{pro:Mahan}
  If $\calG$ is a finite game and $\eps=0$, then Algorithm \ref{alg:do} recovers an equilibrium of $\calG$ in finitely-many steps.
\end{proposition}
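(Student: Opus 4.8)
The plan is to combine the monotonicity of the strategy sets generated by the algorithm with the finiteness of $\calG$ to force termination after finitely many iterations, and then to read off from Lemma~\ref{lem:algstop} with $\eps=0$ that the output is an exact equilibrium. I would first note that the algorithm is well defined on a finite game: every subgame on $\bX^j=X_1^j\times\dots\times X_n^j$ is again finite and therefore possesses an equilibrium $\bp^j$ by Nash's theorem (the finite instance of Glicksberg's theorem cited earlier), so line~4 is always executable.

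The next, purely structural, observation is that line~7 only ever enlarges strategy sets. Hence the sets form nondecreasing chains $X_i^1\subseteq X_i^2\subseteq\cdots$ sitting inside the fixed finite set $X_i$, and consequently $\bX^1\subseteq\bX^2\subseteq\cdots\subseteq\bX$, where at each step the inclusion $\bX^j\subseteq\bX^{j+1}$ is either an equality or a strict enlargement.

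The crux of the argument --- and the one step I expect to require genuine care --- is to show that whenever the algorithm fails to stop at step~$j$, the enlargement must be strict, i.e.\ $\bX^j\subsetneq\bX^{j+1}$. Failure to stop with $\eps=0$ means $\bU(\bx^{j+1},\bp^j)-\bU(\bp^j)\not\le\mathbf{0}$; since \eqref{ineq:upper} already gives $\bU(\bx^{j+1},\bp^j)\ge\bU(\bp^j)$, this forces $\bU(\bx^{j+1},\bp^j)\ne\bU(\bp^j)$. By the contrapositive of Lemma~\ref{lemma:do_stop} we then cannot have $\bX^j=\bX^{j+1}$, so the enlargement is strict. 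Concretely, the offending player's best response satisfies $U_i(x_i^{j+1},\bp_{-i}^j)>U_i(\bp^j)=\max_{x\in X_i^j}U_i(x,\bp_{-i}^j)$, which beats every strategy already present and hence places a genuinely new point $x_i^{j+1}\notin X_i^j$ into the set.

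Finally I would close with a pigeonhole count: a strictly increasing chain of subsets of the finite set $\bX$ admits at most $\abs{\bX}$ strict enlargements, so the algorithm cannot fail the stopping test indefinitely and must terminate at some finite step~$j$. At that step Lemma~\ref{lem:algstop} guarantees that $\bp^j$ is an $\eps$-equilibrium, and because $\eps=0$ this is precisely a Nash equilibrium of $\calG$. Apart from the strict-growth step, which is the one place where the stopping characterization does the real work, the remainder is routine bookkeeping on finite cardinalities.
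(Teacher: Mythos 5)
Your proof is correct and is essentially the paper's own argument: the paper likewise observes that the nondecreasing chains $X_i^1\subseteq X_i^2\subseteq\cdots$ inside the finite sets $X_i$ must stabilize, invokes Lemma~\ref{lemma:do_stop} to conclude the stopping condition then holds with $\eps=0$, and applies Lemma~\ref{lem:algstop} to identify $\bp^j$ as an equilibrium. Your only variation is presentational --- you run Lemma~\ref{lemma:do_stop} in contrapositive form (no termination forces strict enlargement, then pigeonhole), which makes the termination step slightly more explicit than the paper's one-line version but is the same argument.
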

\begin{proof}
  Let $\calG$ be finite and $\eps=0$. Since each $X_i$ is finite, there exists an iteration $j$  in which $\bX^{j+1}=\bX^j$. Then the terminating condition of Algorithm \ref{alg:do} is satisfied with $\eps=0$ (Lemma \ref{lemma:do_stop}) and $\bp^j$ is an equilibrium of $\calG$ (Lemma~\ref{lem:algstop}). \qed
\end{proof}
\noindent
This is our main result, which is a non-trivial extension of the convergence theorem from \cite{Adam21}.
\begin{theorem}\label{thm:convergence}
Let $\calG$ be a continuous game.
\begin{enumerate}
  \item Let $\eps=0$. If Algorithm \ref{alg:do} stops at step $j$, then $\bp^j$ is an~equilibrium of $\calG$. Otherwise, any accumulation point of $\bp^1,\bp^2,\dots$ is an equilibrium of $\calG$.
 \item Let $\eps>0$. Then Algorithm \ref{alg:do} terminates at some step $j$ and $\bp^j$ is an $\eps$-equilibrium of~$\calG$.
\end{enumerate}
\end{theorem}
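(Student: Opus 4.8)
The plan is to dispatch the finite-stop cases with the correctness result already in hand, and to reduce both the ``never stops'' part of item 1 and the termination part of item 2 to a single compactness-plus-monotonicity argument about accumulation points. If Algorithm \ref{alg:do} halts at step $j$ with $\eps=0$, there is nothing to prove: Lemma \ref{lem:algstop} certifies that $\bp^j$ is a $0$-equilibrium, that is, an equilibrium.

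The heart of the matter is the non-terminating case of item 1. Suppose $\bp^*$ is an accumulation point, so some subsequence $\bp^{j_k}$ converges weakly to $\bp^*$; Proposition \ref{prop:convergence} guarantees such points exist. Since there are finitely many players and each $X_i$ is compact, I would pass to a further subsequence (not relabelled, preserving the same limit $\bp^*$) along which the generated best responses also converge, say $x_i^{j_k+1}\to x_i^*\in X_i$ for every $i\in N$. The decisive structural fact is that the strategy sets only grow, $X_i^{j}\subseteq X_i^{j'}$ for $j\le j'$, so for any $m>k$ the strategy $x_i^{j_k+1}$ already lies in $X_i^{j_m}$. Because $\bp^{j_m}$ is an equilibrium of the subgame on $\bigtimes_k X_k^{j_m}$, player $i$ gains nothing by deviating to this available pure strategy, giving
$$U_i(x_i^{j_k+1},\bp^{j_m}_{-i})\le U_i(\bp^{j_m}),\qquad m>k.$$

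I would then run a two-stage limit. Fixing $k$ and letting $m\to\infty$, weak continuity of $U_i$ together with continuity of the embedding $x\mapsto\delta_x$ yields $U_i(x_i^{j_k+1},\bp^*_{-i})\le U_i(\bp^*)$. Letting $k\to\infty$ and using joint continuity of $(x,\bp_{-i})\mapsto U_i(x,\bp_{-i})$ with $x_i^{j_k+1}\to x_i^*$ gives $U_i(x_i^*,\bp^*_{-i})\le U_i(\bp^*)$. Finally, since $x_i^{j_k+1}$ is a best response at step $j_k$, one has $U_i(x_i^{j_k+1},\bp^{j_k}_{-i})=\max_{x\in X_i}U_i(x,\bp^{j_k}_{-i})$; continuity of the maximized value $\bp_{-i}\mapsto\max_{x\in X_i}U_i(x,\bp_{-i})$, which follows from continuity of $U_i$ and compactness of $X_i$, forces $U_i(x_i^*,\bp^*_{-i})=\max_{x\in X_i}U_i(x,\bp^*_{-i})$. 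Combining, $\max_{x\in X_i}U_i(x,\bp^*_{-i})\le U_i(\bp^*)$ for every $i$, which is precisely the equilibrium characterization stated after Glicksberg's theorem; hence $\bp^*$ is an equilibrium. For item 2 with $\eps>0$, I would argue by contradiction: if the algorithm never halts, the test \eqref{def:diff} fails at every step, so some player violates it, and by pigeonhole a fixed player $i_0$ violates it along infinitely many steps. Extracting the subsequence above from those steps keeps $U_{i_0}(x_{i_0}^{j_k+1},\bp^{j_k}_{-i_0})-U_{i_0}(\bp^{j_k})>\eps$ for all $k$, and passing to the limit by joint continuity yields $U_{i_0}(x_{i_0}^*,\bp^*_{-i_0})-U_{i_0}(\bp^*)\ge\eps>0$, contradicting the inequality $U_{i_0}(x_{i_0}^*,\bp^*_{-i_0})\le U_{i_0}(\bp^*)$ from the double-limit argument. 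Thus the algorithm terminates at some step $j$, and Lemma \ref{lem:algstop} gives that $\bp^j$ is an $\eps$-equilibrium.

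The main obstacle I anticipate is the bookkeeping that makes the two-stage limit legitimate: one must extract a single subsequence along which the profiles and all $n$ best-response sequences converge while preserving the given accumulation point, and then take care that the inner limit ($m\to\infty$ with $k$ fixed) uses only weak continuity in the opponents' measures, whereas the outer limit ($k\to\infty$) additionally needs joint continuity in the deviating pure strategy together with continuity of the best-response value. The monotonicity of the strategy sets is what couples the two indices, turning ``added in the past'' into ``respected in the future,'' and it is the only place where the structure of the algorithm genuinely enters.
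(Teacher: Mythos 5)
Your proof is correct, and it closes the argument by a genuinely different mechanism than the paper. Both proofs share the same two pillars --- the monotone growth $X_i^j\subseteq X_i^{j+1}$ combined with the subgame-equilibrium property of $\bp^j$, and compactness of $X_i$ to extract a limit $x_i^*$ of the best responses --- but they synthesize them differently. The paper introduces $Y_i=\bigcup_j X_i^j$, proves the equilibrium inequality $U_i(\bp^*)\ge U_i(x,\bp_{-i}^*)$ first for all $x\in\cl(Y_i)$ (your ``inner limit'' in disguise, see \eqref{ineq:NExcl}), and then handles an arbitrary deviation $x\in X_i\setminus\cl(Y_i)$ by comparing it against the best responses along the sequence and noting that their limit $x'$ lands in $\cl(Y_i)$ (see \eqref{eq:conv_aux3}). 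You avoid the case split entirely: your double-indexed inequality $U_i(x_i^{j_k+1},\bp_{-i}^{j_m})\le U_i(\bp^{j_m})$ for $m>k$ gives $U_i(x_i^*,\bp_{-i}^*)\le U_i(\bp^*)$, and then continuity of the value function $\bp_{-i}\mapsto\max_{x\in X_i}U_i(x,\bp_{-i})$ --- a Berge-maximum-theorem-style fact, legitimate here because $(x,\bp_{-i})\mapsto U_i(x,\bp_{-i})$ is jointly continuous by uniform continuity of $u_i$ on the compact $\bX$ --- shows the limit $x_i^*$ is itself a best response to $\bp^*_{-i}$, so all deviations are covered at once via the characterization following Proposition \ref{prop:switch}. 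For item 2 the paper directly derives $\lim_j\bigl(U_i(x_i^{j+1},\bp_{-i}^j)-U_i(\bp^j)\bigr)=0$ along a subsequence (equation \eqref{eq:limiting}, which it then reuses in the proposition after the theorem), whereas you run a pigeonhole-plus-contradiction on a single violating player; the two are essentially equivalent, though note your extracted subsequence of violating steps inherits the needed structure only because the growth of the $X_i^j$ and the subgame-equilibrium property hold along the \emph{whole} run --- which you correctly exploit. Your route is arguably cleaner and more self-contained; the paper's route has the side benefit of producing the quantitative statement \eqref{eq:limiting} used later, and of isolating exactly where deviations outside $\cl(Y_i)$ require the best-response dominance.
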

\begin{proof}

  \emph{Item 1.} Let $\eps=0$. If Algorithm~\ref{alg:do} terminates at step $j$, then Lemma \ref{lem:algstop} implies that $\bp^j$ is an equilibrium of $\calG$. In the opposite case, the algorithm generates a sequence of mixed strategy profiles $\bp^1, \bp^2, \dots$ Consider any weakly convergent subsequence of this sequence --- at least one such subsequence exists by Proposition~\ref{prop:convergence}. Without loss of generality, such a subsequence will be denoted by the same indices as the original sequence. Therefore, for each player $i\in N$, there exists some $p^*_i\in M_i$, such that the sequence $p_i^1,p_i^2,\dots$ weakly converges to $p_i^*$. We need to show that  $\bp^*=(p^*_1,\dots,p^*_n)\in \bM$ is an equilibrium of $\calG$.

  Let $i\in N$. Define
  $$
  Y_i = \bigcup_{j=1}^{\infty} X^j_i.
  $$
     First, assume that $x\in Y_i$. Then there exists $j_0$ with $x\in X_i^j$ for each $j\ge j_0$. Hence the inequality $U_i(\bp^j) \geq U_i(x,\bp_{-i}^j)$, for each $j\ge j_0$, since $\bp^j$ is an equilibrium of the corresponding finite subgame. Therefore,
  $$
  U_i(\bp^*) = \lim_{j}U_i(\bp^j) \geq \lim_{j} U_i(x,\bp_{-i}^j) = U_i(x,\bp_{-i}^*).
  $$
  Further, by continuity of $U_i$,
  \begin{equation}\label{ineq:NExcl}
    U_i(\bp^*) \geq  U_i(x,\bp_{-i}^*) \quad \text{for all } x\in\cl(Y_i).
  \end{equation}
 Now, consider an arbitrary $x\in X_i\setminus \cl(Y_i)$. The definition of $x_{i}^{j+1}$ yields $U_i(x_{i}^{j+1},\bp^j_{-i}) \geq U_i(x,\bp^j_{-i})$ for each $j$. This implies, by continuity,
  \begin{equation}\label{eq:conv_aux3}
    \lim_{j} U_i(x_{i}^{j+1},\bp_{-i}^*) \geq \lim_{j} U_i(x,\bp^j_{-i}) =  U_i(x,\bp_{-i}^*).
  \end{equation}
  Since $x_{i}^{j+1}\in X_i^{j+1}$, compactness of $X_i$ provides a convergent subsequence (denoted by the same indices) such that $ x'= \lim\limits_j x_i^{j} \in \cl(Y_i)$. Then (\ref{ineq:NExcl}) gives
  \begin{equation}\label{ineq:NEx}
    U_i(\bp^*)\ge U_i(x',\bp_{-i}^*) =\lim_{j} U_i(x_{i}^{j+1},\bp_{-i}^*).
  \end{equation}
Combining (\ref{eq:conv_aux3}) and (\ref{ineq:NEx}) shows that $U_i(\bp^*) \geq  U_i(x,\bp_{-i}^*)$.

\emph{Item 2.} Let $\eps>0$. If Algorithm~\ref{alg:do} terminates at step~$j$, then Lemma \ref{lem:algstop} implies that $\bp^j$ is an $\eps$-equilibrium of $\calG$. Otherwise Algorithm ~\ref{alg:do} produces a sequence $\bp^1,\bp^2,\dots$ and we can repeat the analysis as in Item 1 for  convergent subsequences of $(\bp^j)$ and $(\bx^j)$, which are denoted by the same indices. Define $x'= \lim_j x_i^j.$ Then
\begin{equation}\label{ineq:aux1}
  U_i(\bp^*)\ge U_i(x',\bp_{-i}^*)= \lim_{j} U_i(x_{i}^{j+1},\bp_{-i}^j).
\end{equation}
At every step $j$ we have
$U_i(x_i^{j+1},\bp^j_{-i})\geq U_i(\bp^j)$ for each $i\in N$ by \eqref{ineq:upper}. Hence
$
  \lim_{j} U_i(x_{i}^{j+1},\bp_{-i}^j) \geq  U_i(\bp^*).
$
Putting together the last inequality with (\ref{ineq:aux1}), we get
\begin{equation}\label{eq:limiting}
  \lim_{j} (U_i(x_{i}^{j+1},\bp_{-i}^j)-U_i(\bp^j)) = 0
\end{equation}
for each $i\in N$. This equality means that Algorithm \ref{alg:do} stops at some step $j$ and $\bp^j$ is an $\eps$-equilibrium by Lemma \ref{lem:algstop}. \qed
\end{proof}

Algorithm \ref{alg:do} generates the sequence of equilibria $\bp^1,\bp^2,\dots$ in increasingly larger subgames of $\calG$. The sequence itself may fail to converge weakly in $\bM$ even for a~two-player zero-sum continuous game; see Example 1 from \cite{Adam21}. In fact, Theorem \ref{thm:convergence} guarantees only convergence to an accumulation point. We recall that this is a~typical feature of some globally convergent methods not only in infinite-dimensional spaces \cite[Theorem 2.2]{Hinze08}, but also in Euclidean spaces. For example, a gradient method generates the sequence such that only its accumulation points are guaranteed to be stationary points; see \cite[Proposition 1.2.1]{Bertsekas16} for details. One necessary condition for the weak convergence of $\bp^1,\bp^2,\dots$ is easy to formulate using the stopping condition of Algorithm~\ref{alg:do}.
\begin{proposition}
  If the sequence $\bp^1,\bp^2,\dots$ generated by Algorithm \ref{alg:do} converges weakly to an equilibrium $\bp$, then $
  \lim_j \bigl(U_i(x_i^{j+1},\bp_{-i}^j)-U_i(\bp^j)\bigr)=0,$ for each $i\in N.
  $
\end{proposition}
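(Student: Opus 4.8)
The plan is to fix a player $i\in N$, set $\Delta_j = U_i(x_i^{j+1},\bp_{-i}^j) - U_i(\bp^j)$, and show $\lim_j \Delta_j = 0$. By inequality \eqref{ineq:upper} we have $\Delta_j \ge 0$ for every $j$, so $\liminf_j \Delta_j \ge 0$ automatically; it therefore suffices to prove $\limsup_j \Delta_j \le 0$, which then forces $\lim_j \Delta_j = 0$ for the full sequence. Since $\bp^j$ converges weakly to $\bp$, weak continuity of $U_i$ gives $U_i(\bp^j) \to U_i(\bp)$, and hence the whole task reduces to showing $\limsup_j U_i(x_i^{j+1},\bp_{-i}^j) \le U_i(\bp)$. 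Here I use that $x_i^{j+1}$ is a best response to $\bp_{-i}^j$, so $U_i(x_i^{j+1},\bp_{-i}^j) = \max_{x\in X_i} U_i(x,\bp_{-i}^j)$.

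The key auxiliary fact I would establish first is a joint-continuity statement: if $x^j \to x'$ in $X_i$ and $\bp_{-i}^j \to \bp_{-i}$ weakly, then $U_i(x^j,\bp_{-i}^j) \to U_i(x',\bp_{-i})$. Writing $U_i(x^j,\bp_{-i}^j) - U_i(x',\bp_{-i}) = \bigl(U_i(x^j,\bp_{-i}^j) - U_i(x',\bp_{-i}^j)\bigr) + \bigl(U_i(x',\bp_{-i}^j) - U_i(x',\bp_{-i})\bigr)$, the second bracket tends to $0$ by weak convergence of $\bp_{-i}^j$ applied to the fixed continuous integrand $u_i(x',\cdot)$, while the first bracket is bounded in absolute value by $\sup_y \abs{u_i(x^j,y) - u_i(x',y)}$, which tends to $0$ because $u_i$ is uniformly continuous on the compact set $\bX$ and each $\bp_{-i}^j$ has total mass one.

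With this in hand, I would choose a subsequence along which $U_i(x_i^{j+1},\bp_{-i}^j)$ converges to $\limsup_j U_i(x_i^{j+1},\bp_{-i}^j)$; by compactness of $X_i$, refine it once more so that the best responses satisfy $x_i^{j+1} \to x'$ for some $x'\in X_i$. The joint-continuity fact then yields $U_i(x_i^{j+1},\bp_{-i}^j) \to U_i(x',\bp_{-i})$ along this subsequence, so $\limsup_j U_i(x_i^{j+1},\bp_{-i}^j) = U_i(x',\bp_{-i})$. Because $\bp$ is an equilibrium by hypothesis, the characterization of equilibrium gives $U_i(x,\bp_{-i}) \le U_i(\bp)$ for every $x\in X_i$, in particular $U_i(x',\bp_{-i}) \le U_i(\bp)$. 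Combining, $\limsup_j U_i(x_i^{j+1},\bp_{-i}^j) \le U_i(\bp)$, which is what remained to be shown.

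The main obstacle is precisely the step where both arguments of $U_i$ vary simultaneously: the hypothesis supplies convergence of the profiles $\bp^j$ but gives no control over the best responses $x_i^{j+1}$, which need not converge. One cannot pass to the limit inside $U_i(x_i^{j+1},\bp_{-i}^j)$ term by term, and this is resolved by the uniform-continuity estimate of the second paragraph combined with a compactness extraction of a convergent subsequence of $(x_i^{j+1})$ --- exactly the place where continuity of $u_i$ on the compact product space $\bX$ does the real work.
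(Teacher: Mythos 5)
Your proof is correct, but it takes a more self-contained route than the paper. The paper's own proof is a three-line triangle-inequality argument: using \eqref{ineq:upper} to drop the absolute value, it bounds $U_i(x_i^{j+1},\bp_{-i}^j)-U_i(\bp^j)$ by $\abs{U_i(x_i^{j+1},\bp_{-i}^j)-U_i(\bp)} + \abs{U_i(\bp)-U_i(\bp^j)}$ and then cites \eqref{eq:limiting} --- the limit established inside the proof of Theorem \ref{thm:convergence}, Item 2 --- for the first summand and weak convergence for the second. In other words, the paper outsources all the real work to an equation derived in another proof, and derived there only along convergent \emph{subsequences} of $(\bp^j)$ and $(\bx^j)$. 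You instead re-derive the needed limit from scratch: nonnegativity of the gap via \eqref{ineq:upper}, reduction to $\limsup_j U_i(x_i^{j+1},\bp_{-i}^j)\le U_i(\bp)$, an explicit joint-continuity lemma for $(x,\bp_{-i})\mapsto U_i(x,\bp_{-i})$ proved by uniform continuity of $u_i$ on the compact set $\bX$, and a compactness extraction of a convergent subsequence of best responses, closed off by applying the equilibrium characterization $U_i(x',\bp_{-i})\le U_i(\bp)$ directly to the hypothesized limit $\bp$. This buys two things the paper's terse proof leaves implicit: first, your limsup-extraction argument honestly handles the passage from subsequential limits to the full-sequence limit that the proposition asserts (the cited \eqref{eq:limiting} was only justified along subsequences, so invoking it verbatim requires exactly the kind of argument you supply); second, your use of the equilibrium hypothesis on $\bp$ replaces the more involved accumulation-point analysis over $\cl\bigl(\bigcup_j X_i^j\bigr)$ that underlies \eqref{eq:limiting} in the theorem's proof, making the proposition's proof independent of the theorem. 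The one point stated without proof in your write-up --- that weak convergence of the marginals $p_k^j$ gives $U_i(x',\bp_{-i}^j)\to U_i(x',\bp_{-i})$, i.e.\ weak convergence of the product measures --- is a standard fact that the paper itself uses silently when asserting continuity of $U_i$ on $\bM$, so you are on equal footing there.
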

\begin{proof}
   Using \eqref{ineq:upper} and the triangle inequality,
  \begin{align*}
    &U_i(x_i^{j+1},\bp_{-i}^j)-U_i(\bp^j)=\abs{U_i(x_i^{j+1},\bp_{-i}^j)-U_i(\bp^j)} \\
    &\leq \abs{U_i(x_i^{j+1},\bp_{-i}^j)-U_i(\bp)} + \abs{U_i(\bp)-U_i(\bp^j)}.
  \end{align*}
  As $j\to \infty$, the first summand goes to zero by \eqref{eq:limiting} and the second by the assumption. Hence the conclusion. \qed
\end{proof}

In our numerical experiments (see Section \ref{sec:exp}), we compute the difference
\begin{equation}\label{crit1}
  U_i(x_i^{j+1},\bp_{-i}^j)-U_i(\bp^j)
\end{equation}
 at each step $j$ and check if such differences are diminishing with $j$ increasing. This provides a simple heuristics to detect the quality of approximation and convergence. Another option is to calculate the Wasserstein distance
 \begin{equation}\label{crit2}
  d_W(\bp^j,\bp^{j+1}),
 \end{equation}
 which can be done with a linear program \eqref{def:LPobj} or approximately using the bounds \eqref{W:ineq}. If the sequence $\bp^1,\bp^2,\dots$ converges weakly, then $d_W(\bp^j,\bp^{j+1}) \to 0$. We include the values \eqref{crit2} in the results of some numerical experiments and observe that they are decreasing to zero quickly. It can be shown that neither \eqref{crit1} nor \eqref{crit2} are monotone sequences. The lack of monotonicity is apparent from the graphs of our experiments; see Figure \ref{figure0}, for example.

Algorithm \ref{alg:do} is a meta-algorithm, which is parameterized by
\begin{enumerate}
  \item the algorithm for computing equilibria of sampled finite games (the \emph{master problem}) and
  \item the optimization method for computing the best response (the \emph{sub-problem}).
\end{enumerate}
 We detail this setup for each example in the next section. The choice of computational methods should reflect the properties of a~continuous game, since the efficiency of methods for solving the master problem and subproblem is the decisive factor for the overall performance and precision of Algorithm \ref{alg:do}. For example, polymatrix games are solvable in polynomial time \cite{Cai16}, whereas finding even an approximate Nash equilibrium of a finite general-sum game is a very hard problem \cite{Daskalakis09}. As for the solution of the subproblem, the best response computation can be based on global solvers for special classes of utility functions.

\section{Numerical Experiments}\label{sec:exp}
We demonstrate the versatility of our method by solving (i) various games appearing in current papers and (ii) randomly generated games. In some cases we show the progress of the convergence-criterion value \eqref{crit1} called ``instability'' over the course of $10$ iterations, and we also plot the Wasserstein distance \eqref{crit2} between mixed strategies in consecutive iterations. All experiments were initialized with random unit vectors. In the games with polynomial utility functions, the best response oracles employ methods of global polynomial optimization \cite{Lasserre2015}. In other cases we use local solvers, which nevertheless perform sufficiently well.

We used a laptop running Linux 5.13 on Intel Core i5-7200U CPU with 8 GiB of system memory to perform our experiments. Our implementation uses Julia 1.6, JuMP \cite{DunningHuchetteLubin2017}, and Mosek. We also used the solver Ipopt \cite{ipopt} when an explicit best response formulation was unavailable. The Julia source codes will be attached to this paper. Examples \ref{ex0}--\ref{ex2} took between 0.1 and 0.2 seconds to compute and Example \ref{ex3} took $1$ second.

\begin{example}[Zero-sum polynomial game \cite{ParriloIEEE06}]\label{ex0}
    Consider a two-player zero-sum game with strategy sets $[-1,1]$ and with the utility function of the first player $u(x,y) = 2xy^2 - x^2 - y$ on $[-1,1]^2$. As the generated subgames are zero-sum, we can use linear programming to find their equilibria. The global method for optimizing polynomials described in \cite{ParriloIEEE06} is an appropriate best response oracle in this case. After 10 iterations, our method finds pure strategies $
    x \approx 0.4$ and $y \approx 0.63$,
    resulting in payoffs \((-0.47, 0.47)\). An oracle based on a hierarchy of semidefinite relaxations (as implemented in SumOfSquares \cite{weisser2019polynomial}) can be used instead to handle games with semialgebraic strategy sets.

    \begin{figure}[!htpb]
      \centering
        \includegraphics{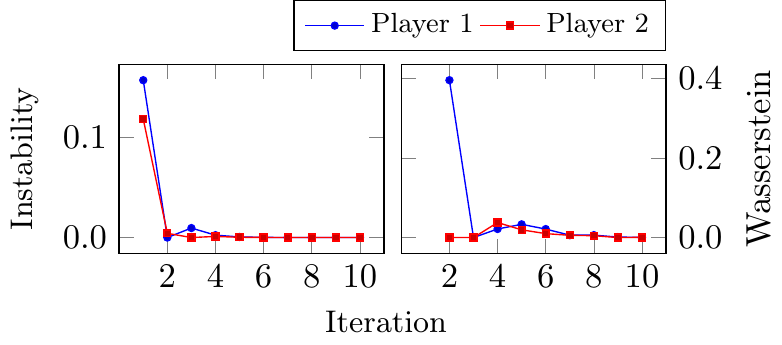}
        \caption{Convergence in Example \ref{ex0}}
        \label{figure0}
    \end{figure}

\end{example}
\begin{example}[General-sum polynomial game \cite{SteinOzdaglarParrilo08}]\label{ex1}
The strategy set of each player is $[-1,1]$ and utility functions are
\begin{align*}
  u_{1}(x,y) & = -3x^2y^2 - 2x^3 + 3y^3 + 2xy - x,\\
  u_{2}(x,y) &= 2x^2y^2 + x^2y - 4y^3 - x^2 + 4y.
\end{align*}
 Our method finds mixed strategies $$
        x \approx\begin{cases}
            0.11 &44.19,\%\\
            -1.0 &55.81,\%\\
        \end{cases} \qquad
        y \approx 0.72,
    $$
    resulting in payoffs \((1.13, 1.81)\). We use the PATH solver \cite{Ferris00} for linear complementarity problems to find equilibria in the generated subgames.

    \begin{figure}[!htpb]
      \centering
        \includegraphics{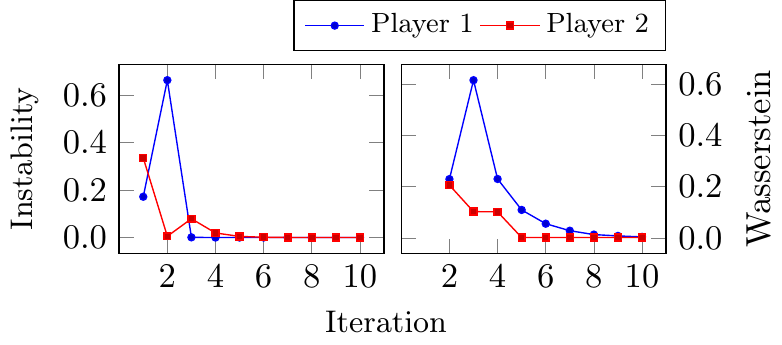}
        \caption{Convergence in Example \ref{ex1}}
    \end{figure}
\end{example}
\begin{example}[Torus game \cite{Chasnov20}]\label{ex2}
    Each strategy set is the unit circle $S^1 = [-\pi, \pi]$ and the utility functions are
 \begin{align*}
  u_{1}(\theta_1, \theta_2) &= \alpha_1 \cos(\theta_1 - \phi_1) - \cos(\theta_1 - \theta_2),\\
  u_{2}(\theta_1, \theta_2) &= \alpha_2 \cos(\theta_2 - \phi_2) - \cos(\theta_2 - \theta_1).
 \end{align*}
 where $\phi = (0, \pi/8)$ and $\alpha = (1, 1.5)$. Using Ipopt as the best response oracle, our method returns pure strategies $\theta_1 \approx 1.41$, $\theta_2 \approx -0.33
    $
    resulting in payoffs \((0.32, 1.29)\).
    \begin{figure}[!htpb]
      \centering
        \includegraphics{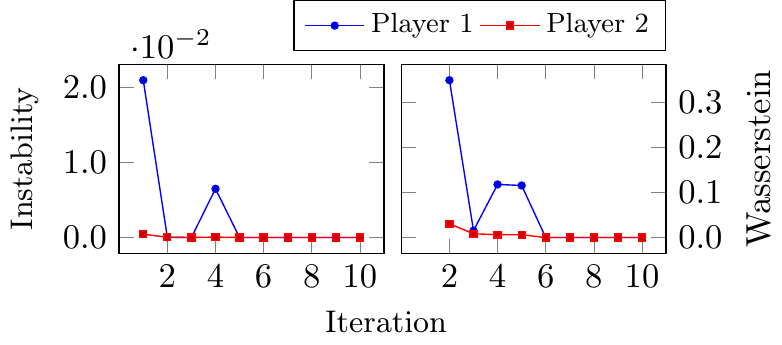}
        \caption{Convergence in Example \ref{ex2}}
    \end{figure}
\end{example}
\begin{example}[General Blotto \cite{Golman09}]\label{ex3}
    Each strategy set in this two-player zero-sum game is the standard $4$-dimensional simplex in $\R^5$ and the utility function of the first player is
    $
    u(\textbf{x}, \textbf{y}) = \sum_{j=1}^5 f(x_j - y_j)$,
      where $f(x) = \sgn(x)\cdot x^2$. Using linear programming to solve the master problem and Ipopt for the approximation of best response, our method finds the pure strategies $
        \textbf{x} \approx (0, 0, 0, 1, 0)$ and
        $\textbf{y} \approx (0, 1, 0, 0, 0)$
    resulting in payoffs \((0, 0)\). This is an~equilibrium by \cite[Proposition 4]{Golman09}.

    \begin{figure}[!htpb]
      \centering
        \includegraphics{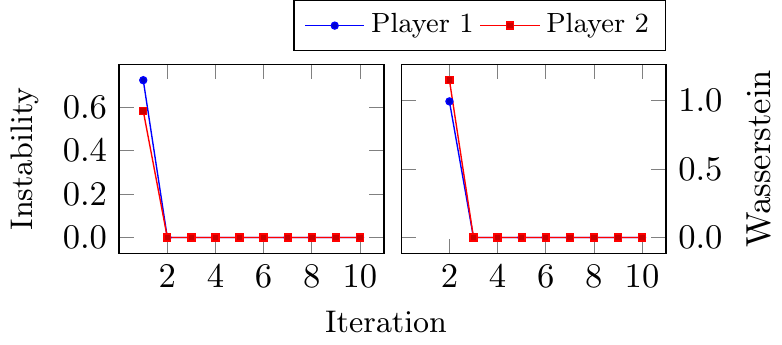}
        \caption{Convergence in Example \ref{ex3}}
    \end{figure}
\end{example}

We note that some well-known classes of games with compact actions spaces cannot be used in our experiments. The typical case in point are Colonel Blotto games \cite{Behnezhad2019-fc} since their utility function is discontinuous. By contrast, certain General Blotto games \cite{Golman09} are continuous games (see Example \ref{ex3}).

\subsection{Experiments with separable network games}
Separable network games (polymatrix games) with finitely many strategies of each player can be solved in polynomial time \cite{Cai16} by linear programming. However, the behaviour of learning methods such Multiplicative Weights Update can be fairly complex already in case of two players; see \cite{Bailey2018-to}. We remark that zero-sum polymatrix games are payoff-equivalent to the general polymatrix games by \cite[Theorem 7]{Cai16}.

We use Algorithm \ref{alg:do} to compute equilibria of polymatrix games defined by 20 by 20 matrices. Specifically, we generated a random matrix for each edge in the network and then transposed and subtracted the matrix of utility functions to make the game globally zero sum. In a test of 100 games, our algorithm found an $\eps$-equilibrium with $\eps = 0.01$ after $17.69$ iterations in a mean time of $0.29$ seconds.

Further, we considered a continuous generalization of separable network games in which the strategy sets are $[-1,1]$ and the utility functions are polynomials. This class of games was analyzed with the tools of polynomial optimization in \cite{KroVaVo}.

\begin{example}[Three-player zero-sum polynomial game]\label{ex4}
 There are 3 players. All pairs of players are involved in bilateral general-sum games and each player uses the same strategy across all such games. The sum of all utility functions is zero. The pairwise utility functions on $[-1,1]^2$ are
     \[\begin{aligned}
         u_{1, 2}(x_1,x_2) &= -2x_1x_2^2 + 5x_1x_2 - x_2\\
         u_{1, 3}(x_1,x_3) &= -2x_1^2 - 4x_1x_3 - 2x_3\\
         u_{2, 1}(x_2,x_1) &= 2x_1x_2^2 - 2x_1^2 - 5x_1x_2 + x_2\\
         u_{2, 3}(x_2,x_3) &= -2x_2x_3^2 - 2x_2^2 + 5x_2x_3\\
         u_{3, 1}(x_3,x_1) &= 4x_1^2 + 4x_1x_3 + 2x_3\\
         u_{3, 2}(x_3,x_2) &= 2x_2x_3^2 + 2x_2^2 - 5x_2x_3\\
     \end{aligned}\]
     The polynomial (sum-of-squares) optimization serves as the best response oracle. Our method finds the mixed strategies
      $$
         x_1 \approx -0.06,\;
         x_2 \approx\begin{cases}
             0.35 &70.8\%\\
             0.36 &29.2\%\\
         \end{cases},\;
         x_3 \approx\begin{cases}
             1.0 &72.11\%\\
             -1.0 &27.89\%\\
         \end{cases}
     $$
     with the corresponding payoffs \((-1.23, 0.26, 0.97)\).
     \begin{figure}[!htpb]
       \centering
         \includegraphics{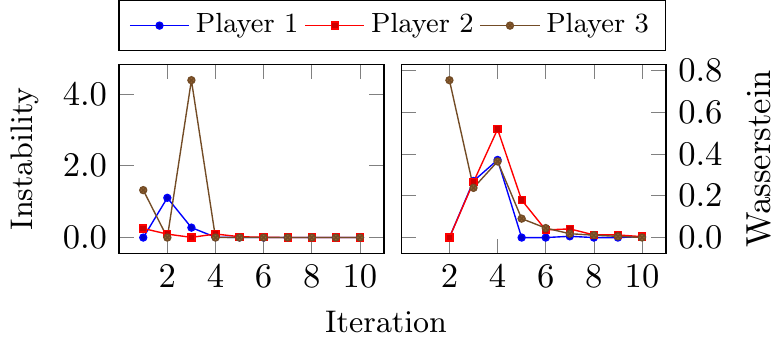}
         \caption{Convergence in Example \ref{ex4}}
     \end{figure}
 \end{example}

In the final round of experiments, we solved randomly generated network games of five players whose utility functions are quartic polynomials over the product of $[-1,1]$ to show that this approach can also solve more complex games. In~particular, we generated the network games by adding three random monomials of degree four or less to each pairwise game, then subtracted the transpose to satisfy the global zero-sum property. While the sum-of-squares approach scales poorly as the polynomial degree and the number of variables grows, its use as a best response oracle means that we only have to consider the variables of one player at~a time. In a test of 100 games, our algorithm found an $\eps$-equilibrium with $\eps = 0.01$ after $5.06$ iterations in $0.78$ seconds on average.

\subsection{Experiments with random general-sum polynomial games}\label{sec:exppolyquartic}
In this experiment, we used the multiple oracle algorithm to find $\eps$-equilibria ($\eps = 0.001$) in continuous multiplayer games with randomly generated quartic polynomial payoffs and $[0,1]$ strategy sets. The time to find equilibria does not appear strongly correlated with the number of players due to the small support of the equilibria --- see Figure \ref{fig:random}. Similarly, the degrees of the payoff polynomials have only a small effect on the runtime. 

\begin{figure}[!htpb]
	\centering
	\includegraphics[width=\linewidth]{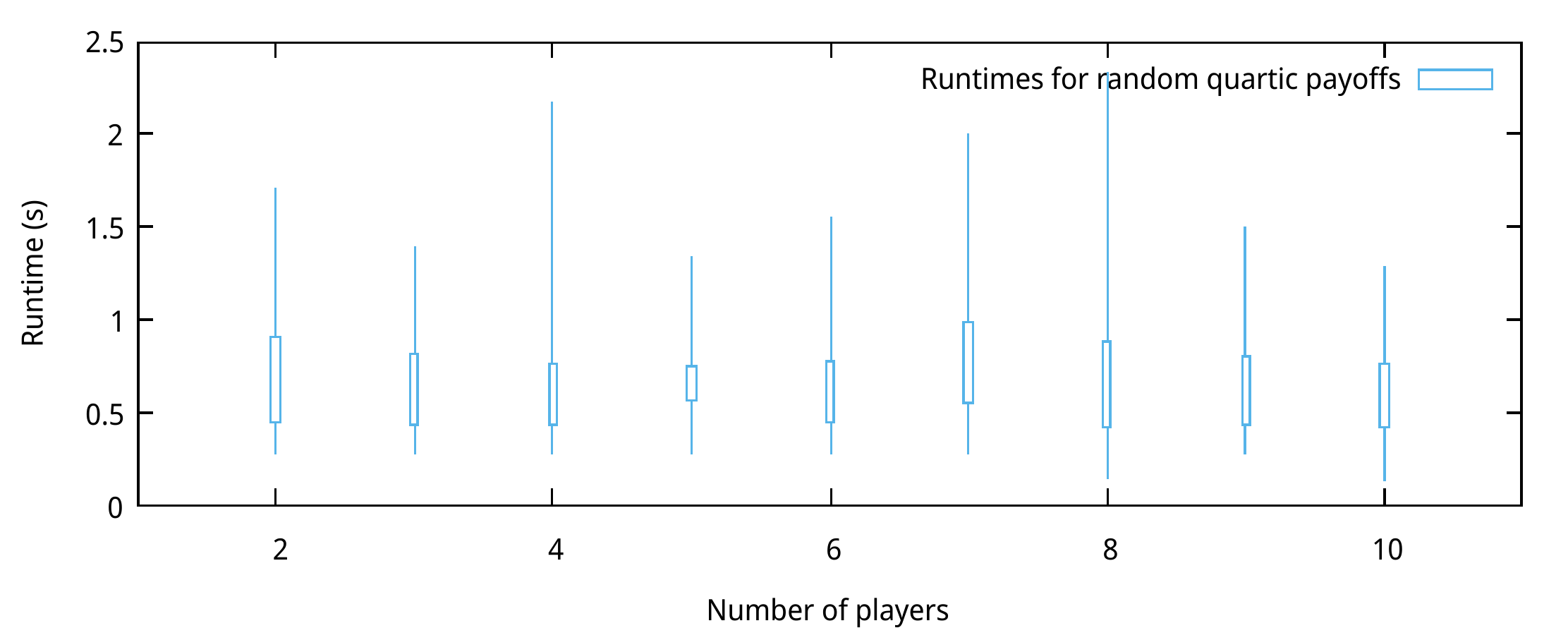}
	\caption{Plot showing the time required to find equilibria of multiplayer polynomial games depending on the number of players. For each player count, the plot shows 100 samples of quartic polynomials with normally distributed random coefficients.}
  \label{fig:random}
\end{figure}

We also conducted the simulation experiment with $100$ samples of polynomial games where each strategy space is the cube $[0,1]^2$. Figure \ref{fig:semibydegree} and Figure \ref{fig:semibyplayer} show the runtimes needed to reach any $10^{-3}$-equilibrium for multiplayer polynomial games with degrees less than $4$ and up to $5$ players.

\begin{figure}[!htpb]
	\centering
	\includegraphics[width=\linewidth]{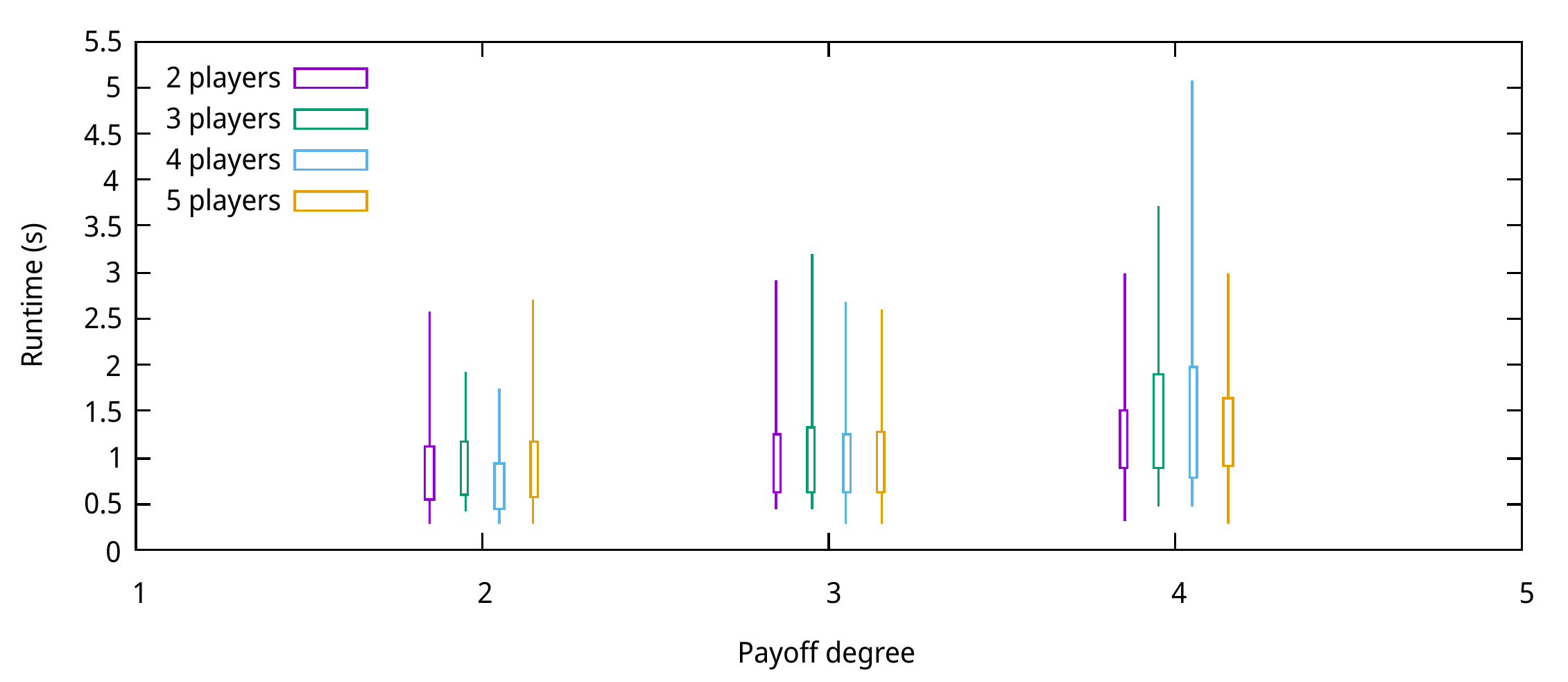}
	\caption{This graph shows the dependence of runtime on the degree of polynomials for games up to $5$ players.}
  \label{fig:semibydegree}
\end{figure}

\begin{figure}[!htpb]
	\centering
	\includegraphics[width=\linewidth]{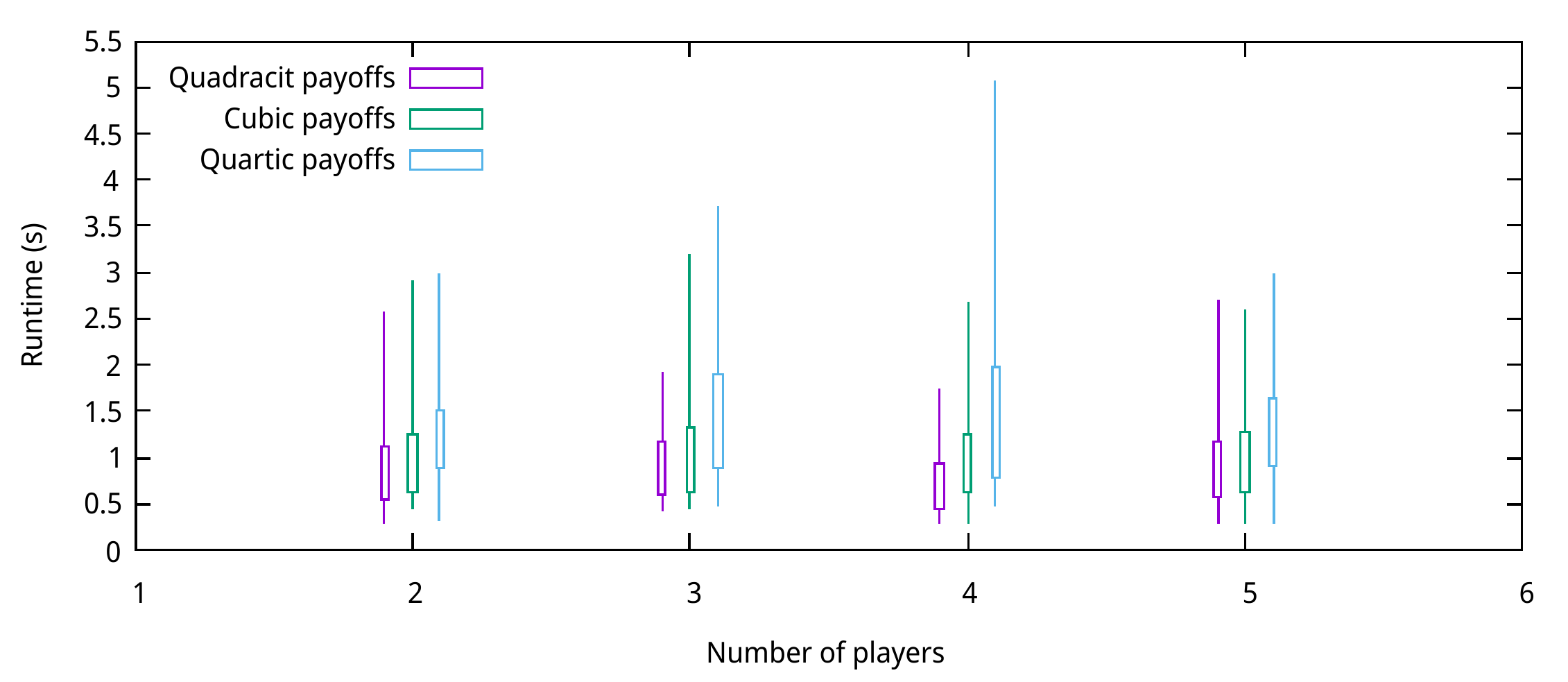}
	\caption{This graph shows the dependence of runtime on the number of players for polynomial games up to degree $4$.}
  \label{fig:semibyplayer}
\end{figure}

\subsection{Using the multiple oracle algorithm to accelerate existing solvers for finite games}
Adding the multiple oracle algorithm on top of solvers such as those implemented in the Gambit library \cite{mckelvey2006gambit} can improve the solution time of large finite general-sum games. Unfortunately, the solvers implemented in Gambit occasionally fail to produce an output or will loop indefinitely. Nevertheless, our preliminary results suggest that the multiple oracle algorithm has the potential to accelerate existing solvers.

We used the global Newton method \cite{govindan2003global} to find equilibria using \texttt{pygambit} as an interface to Gambit, and when the method failed, we used the iterated polymatrix approximation \cite{GOVINDAN20041229} instead. Due to the significant overhead of this approach, our recorded runtimes are much higher than what is theoretically achievable with the multiple oracle algorithm.

\begin{figure}[!htpb]
	\centering
	\includegraphics[width=\linewidth]{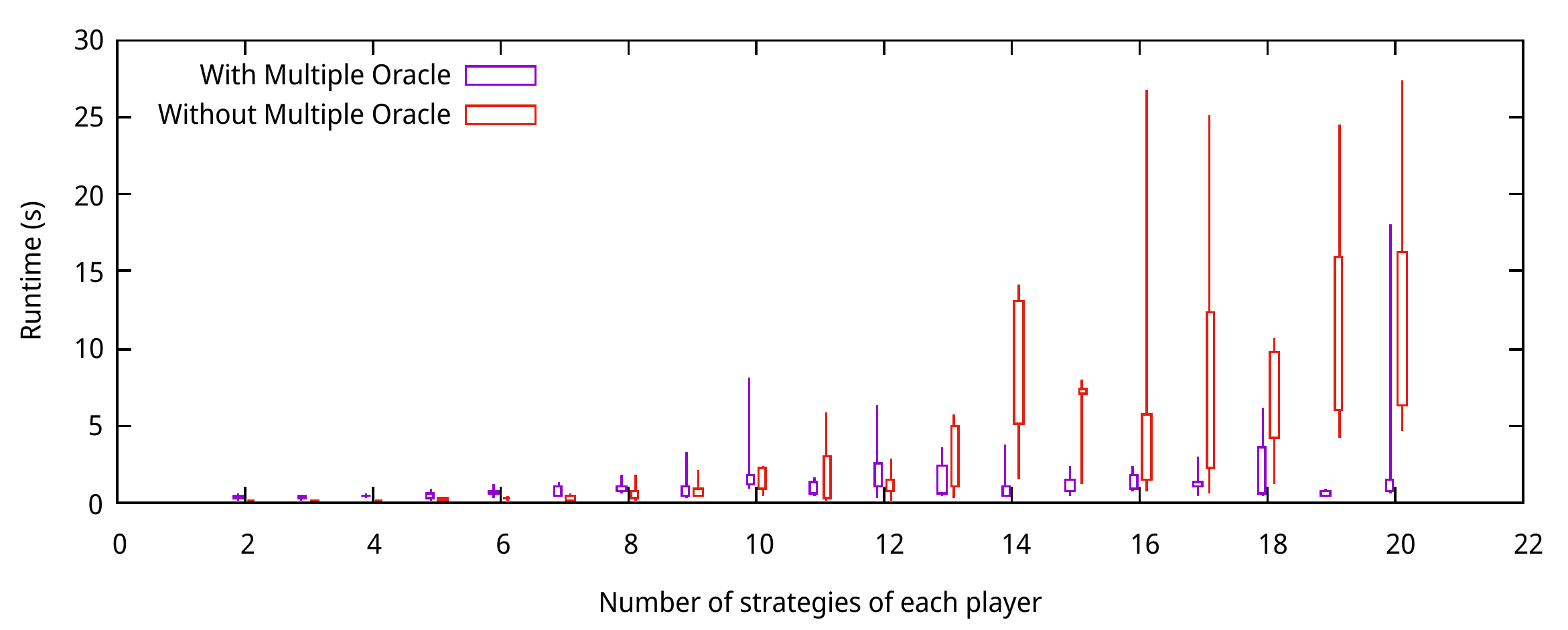}
	\caption{Plot comparing the time required to find equilibria of three-player general-sum finite games with randomly generated payoffs when using algorithms of Gambit with/without the multiple oracle algorithm.}
\end{figure}

\section{Conclusions and Future Research}
The idea of multiple oracle algorithm is to construct a sequence of finite subgames whose equilibria
 approximate the equilibrium of a given continuous game in the Wasserstein metric. We have demonstrated  performance of the algorithm on selected games appearing in current research papers and randomly generated games.  Note that the multiple oracle algorithm makes it possible to approximate the equilibrium of \emph{any} continuous games in the sense of Theorem \ref{thm:convergence}, with the caveat that an individual sequence of equilibria may fail to converge. Although possible in theory (see \cite{Adam21}), this behavior has never been observed in the sample games. Another feature of the algorithm is that the choice of best response oracle and the method for solving finite subgames should be fine-tuned for every particular class of games. A good case in point is the class of polynomial games, which allows for globally optimal solvers for the sub-problem (the best response computation).

 Several examples from the literature show that our method converges fast when the dimensions of strategy spaces are small and the generated subgames are not large. We plan to enlarge the scope of our experiments to include games with many-dimensional strategy spaces. Moreover, the results from Section \ref{sec:exppolyquartic} indicate that more appropriate methods for generating challenging polynomial games should be used to assess the scalability of the multiple oracle algorithm for solving multiplayer polynomial games. 
While we refrained from the detailed discussion of numerous metrics on the space of mixed strategies, we do point out the concept of limit games and equilibria introduced by Fudenberg and Levin \cite{Fudenberg86}. In this connection we plan to study the relation of the underlying convergence to the Wasserstein distance used in this paper.
\bibliographystyle{splncs04}
\bibliography{ref}

\end{document}